\PassOptionsToPackage{table}{xcolor}
\documentclass[table,11pt]{article}
\usepackage[english]{babel}
\usepackage{microtype}
\usepackage{graphicx}
\usepackage{fullpage}
\usepackage{appendix}
\usepackage{bbold}
\usepackage[T1]{fontenc}
\usepackage{lmodern}
\usepackage{csquotes}
\usepackage[dvipsnames]{xcolor}
\usepackage{latexsym}
\usepackage[intlimits]{amsmath}
\usepackage{amsthm}
\usepackage{amsxtra}
\usepackage{amssymb}
\usepackage{booktabs}
\usepackage{makecell}
\usepackage{tabularx}
\usepackage{diagbox}
\usepackage{mathabx}
\usepackage{amsfonts}
\usepackage{mathtools}
\usepackage{centernot}
\usepackage{csquotes}
\usepackage{xurl}
\usepackage{aliascnt}
\usepackage[colorlinks=true,linktoc=all,linkcolor=blue,citecolor=purple,urlcolor=red, backref=page]{hyperref}

\theoremstyle{plain}
\newtheorem{theorem}{Theorem}
\newtheorem{lemma}{Lemma}[section]

\newaliascnt{claim}{lemma}
\newtheorem{claim}[claim]{Claim}
\aliascntresetthe{claim}

\newtheorem{definition}[lemma]{Definition}
\newtheorem{remark}[lemma]{Remark}

\newtheorem{openproblem}[lemma]{Open Problem}

\newcommand{\fullref}[1]{%
  \hyperref[#1]{\autoref*{#1}}%
}

\newcommand{\gch}{\textcolor{Green}{\checkmark}}
\newcommand{\rch}{\textcolor{RedOrange}{\checkmark}}

\def\abs#1{\left| #1 \right|}

\newcommand{\vc}{\ensuremath{\textsc{Vertex Cover}}}
\newcommand{\bvc}{\ensuremath{\textsc{Balanced Vertex Cover}}}
\newcommand{\existefonepo}{\ensuremath{\textsc{Exist-EF1+PO}}}
\newcommand{\vcov}{\ensuremath{v_{\operatorname{cov}}}}

\newcommand{\ignore}[1]{}

\AtBeginDocument{%
}

\title{Nonexistence of Simultaneously EF1 and Pareto Optimal Allocations for Submodular Valuations}

\author{%
  Harish Chandramouleeswaran\\
   Chennai Mathematical Institute, India\\
   \texttt{harishc@cmi.ac.in} \\
   \and
   Prajakta Nimbhorkar\\
   Chennai Mathematical Institute, India\\
   \texttt{prajakta@cmi.ac.in}
}

\date{}

\begin{document}

\maketitle

\begin{abstract}
    The existence of allocations of indivisible goods that are simultaneously `fair' (envy-free up to one item ($\mathrm{EF1}$)) and `efficient' (Pareto optimal ($\mathrm{PO}$)) when agents have monotone submodular valuations has been a longstanding open problem \cite{CaragiannisKMPSW19,BenabbouCIZ21,BarmanS26}.
    
    We settle this question negatively by giving an example with two agents where no allocation is simultaneously $\mathrm{EF1}$ and $\mathrm{PO}$. We also show that determining the existence of such allocations is $\mathrm{NP}$-hard for monotone submodular valuations. Our example uses (unweighted) coverage valuations, which is a strict subclass of monotone submodular valuations. The example has no allocation that is simultaneously EF1 and PO, when the items are interpreted as either goods or chores.
    
    Since $\mathrm{EF1}+\mathrm{PO}$ allocations are known to always exist for additive valuations via the maximization of Nash Social Welfare \cite{CaragiannisKMPSW19}, and for matroid-rank valuations \cite{BenabbouCIZ21}, nonexistence was known only for monotone subadditive valuations \cite{CaragiannisKMPSW19}. Our work moves the nonexistence frontier to unweighted coverage valuations.

    We also show that the example we designed for goods also proves nonexistence of $\mathrm{EF1}+\mathrm{PO}$ in general, for chores with unweighted coverage costs, by interpreting the valuations as disutilities.
\end{abstract}

\section{Introduction}

Fair division studies how scarce resources should be allocated when agents have heterogeneous preferences. Two fundamental outcomes that are desired from an allocation of resources among the agents are \emph{fairness}, which reduces the disparity (or `envy') among agents, and \emph{efficiency}, which rules out allocations that waste mutually beneficial outcomes. \emph{Envy-freeness} ($\mathrm{EF}$) \cite{Foley66} is one of the most natural notions of fairness, while \emph{Pareto optimality} ($\mathrm{PO}$) \cite{Little50} is a natural notion of (economic) efficiency. In the classical setting of \emph{divisible} items (like a cake), envy-freeness and Pareto optimality can often be simultaneously satisfied \cite{Foley66,Varian74}. The case of \emph{indivisible} goods is already different -- exact envy-freeness fails even for the simplest instance of two agents and one good which is valuable for both agents. Hence several relaxations of envy-freeness have been studied in this setting.

One of the most popular and well-studied relaxations of envy-freeness in the case of indivisible goods is the notion of \emph{envy-freeness up to one item} ($\mathrm{EF1}$) \cite{Budish11}. An allocation is $\mathrm{EF1}$ if every agent's envy of another bundle can be eliminated by removing at most one item from the envied bundle. $\mathrm{EF1}$ allocations always exist when agents have arbitrary monotone valuations and can be computed in polynomial-time by the Envy-Cycle-Elimination algorithm \cite{LiptonMMS04}. Pareto optimality ($\mathrm{PO}$), on the other hand, requires that no reallocation weakly improves every agent, and strictly improves at least one.

For additive goods, the compatibility of $\mathrm{EF1}$ and $\mathrm{PO}$ has a clean characterization in terms of Nash Social Welfare -- Caragiannis, Kurokawa, Moulin, Procaccia, Shah and Wang \cite{CaragiannisKMPSW19} proved that every allocation that maximizes the Nash Social Welfare is  $\mathrm{EF1}$ and $\mathrm{PO}$. The problem of computing an allocation that maximizes the Nash Social Welfare is, however, not only $\mathrm{NP}$-hard \cite{NguyenNRR14}, but also $\mathrm{APX}$-hard \cite{Lee17}. There have since been significant algorithmic results for computing $\mathrm{EF1}+\mathrm{PO}$ allocations by bypassing Nash Social Welfare. Barman, Krishnamurthy and Vaish \cite{BarmanKV18} gave a pseudopolynomial-time algorithm for computing $\mathrm{EF1}+\mathrm{PO}$ allocations for additive goods.
When the number of  agents is a fixed constant, Mahara \cite{Mahara25} gave a polynomial-time algorithm for computing $\mathrm{EF1}+\mathrm{fPO}$ allocations, where fractional Pareto optimality (or $\mathrm{fPO}$) is a notion that is stronger than $\mathrm{PO}$.

There is no such characterization of $\mathrm{EF1}+\mathrm{PO}$ in terms of maximum Nash Social Welfare beyond the case of additive goods. In the same work, Caragiannis et al.\ \cite{CaragiannisKMPSW19} proved that maximum Nash welfare allocations could fail to satisfy $\mathrm{EF1}$ even for monotone submodular valuations; their positive conclusion is only the weaker \emph{marginal-$\mathrm{EF1}$} ($\mathrm{MEF1}$) guarantee. They also proved that $\mathrm{EF1}+\mathrm{PO}$ allocations are not guaranteed to exist for monotone subadditive valuations (a strict superclass of monotone submodular valuations), and explicitly left the submodular case open. Subsequently, Benabbou, Chakraborty, Igarashi and Zick \cite{BenabbouCIZ21} proved that $\mathrm{EF1}+\mathrm{PO}$ always exist for matroid-rank valuations, which is a strict subclass of monotone submodular valuations in which every good has binary marginal utility. In a recent work, Barman and Suzuki \cite{BarmanS26} showed that $\mathrm{EF1}+\mathrm{\frac{1}{2}\text{-}PO}$ allocations always exist for monotone subadditive valuations.\footnote{An allocation $X$ is said to be $\frac{1}{2}\text{-}\mathrm{PO}$ if there exists no allocation $B$ such that for every agent $i$, $\frac{1}{2}$-times $i$'s valuation in $B$ is at least as much as $i$'s valuation in $B$ \cite{BarmanS26}.}

Indeed, no equivalence between allocations maximizing Nash Social Welfare and $\mathrm{EF1}+\mathrm{PO}$ is known even for additive valuations, when the items to be allocated are \emph{chores} (items with negative marginal value for all agents), or is a mix of goods and chores (the case of \emph{mixed manna}). For instances with only two agents with additive valuations over mixed manna, Aziz, Caragiannis, Igarashi and Walsh \cite{AzizCIW22} gave a polynomial-time algorithm for computing $\mathrm{EF1}+\mathrm{PO}$ allocations. In a recent breakthrough, Mahara \cite{Mahara26} proved that $\mathrm{EF1}+\mathrm{fPO}$ allocations always exist for instances with any number of agents with additive valuations over chores. For chores with general monotone costs, $\mathrm{EF1}$ allocations always exist and can be computed in polynomial time~\cite{BhaskarSV21}. However, it is known that $\mathrm{EF1}$ can be incompatible with Pareto optimality beyond additive costs. Hosseini, Narang, and W{\k{a}}s~\cite{HosseiniNW25} exhibit a two-agent instance with identical monotone submodular (indeed, coverage) cost functions that admits no allocation that is $\mathrm{EF1}+\mathrm{PO}$. Their costs arise from a rooted tree: the cost of a bundle is the size of the minimal subtree connecting its chores to the root. They further show that deciding whether such an $\mathrm{EF1}+\mathrm{PO}$ allocation exists is $\mathrm{NP}$-hard, even for unweighted tree instances.

\subsection{Our Contributions}
The question of existence of $\mathrm{EF1}+\mathrm{PO}$ allocations for monotone submodular valuations had remained open \cite{CaragiannisKMPSW19,BenabbouCIZ21,BarmanS26}. In this work, we settle this question negatively by giving an instance in \fullref{sec:nonexistence} with two agents and six items in which no $\mathrm{EF1}+\mathrm{PO}$ allocation exists. This holds when the items are interpreted as either goods, or chores (see \fullref{sec:chores}). Both agents in our constructed instance have \emph{(unweighted) coverage valuations}, which is a strict superclass of matroid-rank valuations and a strict subclass of monotone submodular valuations. Moreover, we prove in \fullref{sec:hardness-goods} that deciding if an $\mathrm{EF1}+\mathrm{PO}$ allocation exists for a given instance with monotone submodular valuations over goods is $\mathrm{NP}$-hard even when there are just three agents in the instance.

\subsection{Further Related Work}

The area of fair division of indivisible items has seen exceptional progress in recent years. We refer the reader to the recent survey by Amanatidis et al.~\cite{AmanatidisABFLMVW23}. We now review some more of the existing work that discusses achieving simultaneous fairness and efficiency beyond additive valuations.

For matroid-rank valuations, equivalently submodular valuations with binary marginal gains, Benabbou, Chakraborty, Igarashi and Zick~\cite{BenabbouCIZ21} proved that an $\mathrm{EF1}$ allocation maximizing utilitarian welfare always exists and is polynomial-time computable. Independently, Babaioff, Ezra and Feige~\cite{BabaioffEF21} obtained a truthful, Lorenz-dominating mechanism, and Barman and Verma~\cite{BarmanV22} strengthened its strategic guarantee to group strategy-proofness; these outcomes are $\mathrm{EF1}$ and $\mathrm{PO}$. Viswanathan and Zick~\cite{ViswanathanZ23} subsequently gave a general efficient framework for optimizing several fairness and welfare objectives in this domain.

Kulkarni, Kulkarni and Mehta~\cite{KulkarniKM23} proved $\mathrm{EF1}$ together with maximum social welfare for identical assignment ($\mathrm{OXS}$) valuations, a strict subclass of submodular valuations with nonbinary marginals. Plaut and Roughgarden~\cite{PlautR20} showed the existence of the stronger notion of \emph{envy-freeness up to any good} ($\mathrm{EFX}$) and $\mathrm{PO}$ for identical monotone valuations with strictly positive marginal utilities. This result, together with Lemma 3.5 from \cite{BenabbouCIZ21}, implies $\mathrm{EF1}+\mathrm{PO}$ allocations for identical monotone submodular valuations (see \fullref{sec:identical}).

In more recent work, Feige and Fine~\cite{FeigeF26} combined $\mathrm{EF1}$ with a constant-factor maximin-share guarantee for submodular valuations, without Pareto optimality.

\section{Preliminaries}

Let $N = \{1,\ldots,n\}$ denote the set of agents and let $M$ be a finite set of indivisible goods. Each agent $i \in N$ has a valuation $v_i \colon 2^M \to \mathbb{R}_{\geq 0}$. We assume throughout that $v_i(\varnothing) = 0$, and \emph{monotone}, so $v_i(S) \leq v_i(T)$ whenever $S \subseteq T \subseteq M$. For $S \subseteq M$ and $g \in M \setminus S$, write $v_i(g \mid S) \coloneq v_i(S \cup \{g\}) - v_i(S)$ for the \emph{marginal value of $g$ given $S$}.

A valuation $v_i$ is \emph{submodular} if it has diminishing marginal returns: for all $S \subseteq T \subseteq M$ and $g \in M \setminus T$, $v_i(g \mid S) \geq v_i(g \mid T)$. Equivalently, $v_i(S) + v_i(T) \geq v_i(S \cup T) + v_i(S \cap T)$ for all $S,T \subseteq M$. A valuation is \emph{subadditive} if $v_i(S \cup T) \leq v_i(S) + v_i(T)$ for all $S,T \subseteq M$; every submodular function is subadditive. \emph{Additive valuations}, given by $v_i(S) = \sum_{g \in S} v_i(\{g\})$, form a subclass of monotone submodular valuations. A \emph{matroid-rank valuation} is a submodular valuation whose marginal values belong to $\{0,1\}$.

\emph{Weighted coverage valuations:} Suppose we are given a `universe of coverage elements' $U$, with weights $w_e \geq 0$ for each $e \in U$, and that each item $j \in M$ corresponds to a set
$A_j \subseteq U$. We say that {\em $j$ `covers' elements in $A_j$.} A valuation $v_i$ is a \emph{(weighted) coverage valuation} with respect to $U$, weights $\{w_e\}_{e \in U}$, and coverage description $\{A_j\}_{j \in M}$ if, for every $S \subseteq M$,
\[
v_i(S) = \sum_{e \in \bigcup_{j \in S} A_j} w_e.
\]
Equivalently, each $e \in U$ is a \emph{coverage atom} of weight $w_e$, contributing $w_e$ precisely when $S$ contains at least one item that covers $e$. Weighted coverage valuations form a strict subclass of monotone submodular valuations. If $w_e=1$ for every $e \in U$, the valuation is an \emph{unweighted coverage valuation}.

An \emph{allocation} is a partition $X = (X_1,\ldots,X_n)$ of the set $M$ of goods among the agents $N$, where $X_i$ is the \emph{bundle} that is allocated to agent $i$. Agent $i$ envies agent $j$ in the allocation $X$ if $v_i(X_i) < v_i(X_j)$. The allocation $X$ is said to be \emph{envy-free up to one good} ($\mathrm{EF1}$) if, for every $i,j \in N$, either $X_j = \varnothing$, or there exists a good $g \in X_j$ such that $v_i(X_i) \geq v_i(X_j \setminus \{g\})$. By a slight abuse of notation, we use $v_i(X_j \setminus g)$ to denote $v_i(X_j \setminus \{g\})$.

An allocation $Y$ \emph{Pareto dominates} an allocation $X$ if $v_i(Y_i) \geq v_i(X_i)$ for every $i \in N$, with strict inequality for at least one agent. An allocation is \emph{Pareto optimal} ($\mathrm{PO}$) if it is not Pareto dominated by any allocation.

Finally, we define the analogous notions of $\mathrm{EF1}$ and $\mathrm{PO}$ for chore costs. An allocation $A = (A_1,\ldots,A_n)$ of chores is $\mathrm{EF1}$ if, for every pair of agents $i,j \in [n]$, either $A_i = \varnothing$, or there exists an item $g \in A_i$ such that $v_i(A_i \setminus \{g\}) \leq v_i(A_j)$. It is $\mathrm{PO}$ if there is no allocation $B = (B_1,\ldots,B_n)$ such that $v_i(B_i) \leq v_i(A_i)$ for every $i \in [n]$, with the inequality strict for at least one agent.

\section{Example for Nonexistence of \texorpdfstring{$\mathrm{EF1}+\mathrm{PO}$}{EF1+PO} Allocations}\label{sec:nonexistence}

Our example consists of two agents and six items $M \coloneq \{p_1,p_2,p_3,q,r,r'\}$. The valuations of both the agents are defined via (unweighted) coverage over the universe $U \coloneq \{u_1,\ldots,u_7\}$. The valuations are shown in \fullref{tab:valuations}. The valuation of agent $1$ is denoted with a \textcolor{Green}{green} checkmark, and that of agent $2$ is denoted by an \textcolor{OrangeRed}{orange} checkmark. The value of each set of items is the number of elements covered by them, counted without multiplicity.

\begin{table}[ht]
\centering
\[
\begin{array}{|c|c|c|c|c|c|c|c|}
\hline
\textrm{Items/elements} & u_1 & u_2 & u_3 & u_4 & u_5 & u_6 & u_7\\
\hline
\hline
p_1 & \gch~\rch & & & & \gch~\rch & & \\
\hline
p_2 & & \gch~\rch & & & & \gch~\rch & \\
\hline
p_3 & & & \gch~\rch  & & & & \gch~\rch \\
\hline
q & & & & & \gch~\rch & \gch~\rch & \gch~\rch\\
\hline
r & & & \rch & \gch~\rch & \gch& \gch & \gch \\
\hline
r' & & & \gch & \gch~\rch & \rch& \rch & \rch \\
\hline
\end{array}
\]
\caption{Summary of the valuations.}
\label{tab:valuations}
\end{table}

\begin{theorem}\label{thm:nonexistance}
The instance shown in \fullref{tab:valuations} does not admit an $\mathrm{EF1}+\mathrm{PO}$ allocation.    
\end{theorem}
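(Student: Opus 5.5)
The plan is a direct but organized case analysis. It is driven by which agent receives the two asymmetric items $r$ and $r'$, and it exploits the symmetry of the instance. First I will record the coverage sets explicitly. For agent $1$: $p_1=\{u_1,u_5\}$, $p_2=\{u_2,u_6\}$, $p_3=\{u_3,u_7\}$, $q=\{u_5,u_6,u_7\}$, $r=\{u_4,u_5,u_6,u_7\}$, $r'=\{u_3,u_4\}$. Agent $2$ has the same sets except that the roles of $r$ and $r'$ are exchanged. Hence the map that swaps the two agents and simultaneously swaps $r\leftrightarrow r'$ is a symmetry of the instance. It preserves both $\mathrm{EF1}$ and $\mathrm{PO}$, so each case only needs to be treated up to this involution. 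Note also $v_1(M)=v_2(M)=7$.

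Second, I will split by the location of $r$ and $r'$. There are four cases: (a) $r\in X_1, r'\in X_2$ (each agent gets their ``big'' item); (b) $r\in X_2, r'\in X_1$; (c) both in one bundle; (d) the mirror of (c). Case (d) is the mirror of (c) under the symmetry above. In case (a), both agents already cover $u_4,\dots,u_7$. So $q$ has zero marginal value to both, and each $p_i$ contributes exactly one new element ($u_i$) to its holder, except that $p_3$ may be redundant for the $r'$-side agent's... (I will check each $p_i$'s marginal carefully). The values are then determined by how the $p_i$ are split, and I will show each split is either not $\mathrm{EF1}$ or is Pareto dominated. Typical dominations come from reallocating $q$ or a $p_i$ to the agent with positive marginal, or from moving $r$ or $r'$. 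Cases (b) and (c) give clearly wasteful configurations. For these I will exhibit an explicit Pareto improvement, e.g.\ swapping $r$ and $r'$ between the agents in case (b), or handing the low-marginal big item to the other agent in case (c). This leaves only a handful of candidate allocations to check for $\mathrm{EF1}$.

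Third, for the surviving $\mathrm{PO}$ candidates I will verify that $\mathrm{EF1}$ fails. I will compute $v_i(X_j\setminus g)$ for every $g$ and find an agent whose envy survives the removal of any single item. Conversely, for each $\mathrm{EF1}$ candidate, I will exhibit a dominating allocation. The cleanest way to organize this is to tabulate, for each candidate, the value pair $(v_1(X_1),v_2(X_2))$. I will then compare these pairs against the achievable Pareto frontier, whose extreme points I will compute: values $(7,\cdot)$, $(6,\cdot)$, and so on.

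The main obstacle is making the case analysis both complete and short. Naively there are $2^6=64$ allocations. I expect the delicate step to be case (a). There, allocations like $(\{r,p_1,q\},\{r',p_2,p_3\})$ are envy-free-looking, and they must be killed by a nonobvious Pareto improvement that reshuffles several items at once (for instance giving $q$ together with a $p_i$ to one side while moving $r'$ or $r$). I will first try to find, for each value pair in case (a), a dominating allocation from cases (b)–(d). If that fails, I will fall back on a computer-checkable exhaustive enumeration, presented as a table.
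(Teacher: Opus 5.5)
\textbf{There is a genuine gap: the step you propose for case (b) fails.} You plan to dismiss case (b), $r'\in X_1$ and $r\in X_2$, as ``clearly wasteful'' by swapping $r$ and $r'$. In fact case (b) contains the only two Pareto optimal allocations of the instance:
\begin{itemize}
\item $(\{p_1,p_2,p_3,r'\},\{q,r\})$, with values $(7,5)$;
\item $(\{q,r'\},\{p_1,p_2,p_3,r\})$, with values $(5,7)$.
\end{itemize}
Swapping $r$ and $r'$ in the first gives agent $2$ the bundle $\{q,r'\}$, worth $4<5$. Next to $q$, agent $2$'s ``small'' item $r$ adds $u_3,u_4$, while the ``big'' item $r'$ adds only $u_4$. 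Next to $p_1,p_2,p_3$, agent $1$ gains just $u_4$ from either item.

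Conversely, case (a), which you expected to hold the delicate surviving candidates, contains no Pareto optimal allocation at all. There both agents already cover $u_4,\dots,u_7$, so $q$ is wasted and each $p_i$ adds only $u_i$. The profiles are therefore $(4+k,7-k)$ with welfare $11$, each dominated by $(7,5)$ or $(5,7)$. So your intuition about which agent should hold which of $r,r'$ is backwards. Executed as written, the plan breaks at case (b) and never identifies the allocations on which $\mathrm{EF1}$ must be checked.

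\textbf{The missing idea is to reason about value profiles rather than the location of $r,r'$.}
\begin{itemize}
\item For both agents, $u_1$ and $u_2$ are covered only by $p_1$ and $p_2$. So any bundle of value $7$ contains $p_1,p_2$, and the other agent is then capped at $5$.
\item The profile $(6,6)$ is infeasible. Each agent would need exactly one of $p_1,p_2$, and whichever side gets $r$, the covering requirements for $u_3,\dots,u_6$ conflict.
\item Hence the Pareto frontier is exactly $\{(7,5),(5,7)\}$, realized only by the two allocations above.
\end{itemize}
$\mathrm{EF1}$ fails for both: the agent with value $5$ values the other bundle at $7$, and each of its four items is the unique cover of one element. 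So $v_i(X_j\setminus g)=6>5$ for every $g$.

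Your proposal contains neither this completeness argument for the Pareto optimal candidates nor the $\mathrm{EF1}$ verification. Falling back on exhaustive computer enumeration would settle the instance, but it replaces the proof rather than supplying it.
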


\begin{proof}
    Observe that the maximum value of any bundle for either agent is $7$. For either of the two agents, a bundle has value $7$ only if it contains both $p_1$ and $p_2$. Moreover, for agent $1$, the only bundles of value $7$ are $\{p_1,p_2,p_3,r\}$, $\{p_1,p_2,p_3,r'\}$, and $\{p_1,p_2,q,r'\}$. In each of these cases, agent $2$ gets $\{q,r'\}$, $\{q,r\}$ and $\{p_3,r\}$ with values $4$, $5$, and $3$ respectively.

    Using indicator variables, we can express the valuation of agent $1$ for a set $S\subseteq M$ as
    \begin{align*}
    v_1(S) = \; &\mathbf{1}[p_1\in S] + \mathbf{1}[p_2\in S] + (\mathbf{1}[p_3\in S] \vee \mathbf{1}[r'\in S]) + (\mathbf{1}[r\in S] \vee \mathbf{1}[r'\in S])\\
    &+ (\mathbf{1}[p_1\in S] \vee \mathbf{1}[q\in S] \vee \mathbf{1}[r\in S]) + (\mathbf{1}[p_2\in S] \vee \mathbf{1}[q\in S] \vee \mathbf{1}[r\in S])\\
    &+ (\mathbf{1}[p_3\in S] \vee \mathbf{1}[q\in S] \vee \mathbf{1}[r\in S]).
    \end{align*}
    
Similarly, the valuation of agent $2$ for a set $S\subseteq M$ can be expressed as
    \begin{align*}
    v_2(S) = \; &\mathbf{1}[p_1\in S] + \mathbf{1}[p_2\in S] + (\mathbf{1}[p_3\in S] \vee \mathbf{1}[r\in S]) + (\mathbf{1}[r\in S] \vee \mathbf{1}[r'\in S])\\
    &+ (\mathbf{1}[p_1\in S] \vee \mathbf{1}[q\in S] \vee \mathbf{1}[r'\in S]) + (\mathbf{1}[p_2\in S] \vee \mathbf{1}[q\in S] \vee \mathbf{1}[r'\in S])\\
    &+ (\mathbf{1}[p_3\in S] \vee \mathbf{1}[q\in S] \vee \mathbf{1}[r'\in S]).
    \end{align*}
    
    We now characterize the Pareto optimal allocations in this instance and show that none of them is $\mathrm{EF1}$.
    
    \begin{claim}\label{clm:val75}
      If either of the two agents gets a $7$-valued bundle, then the other agent can get a bundle of value at most $5$.  
    \end{claim}
    
    \begin{proof}
        For either of the two agents, items $p_1$ and $p_2$ are essential to make a $7$-valued bundle as $u_1$ and $u_2$ are covered only by $p_1$ and $p_2$ respectively. But the other agent, who does not get either of $p_1$ and $p_2$ can have a bundle of value at most $5$.
    \end{proof}
    
    \begin{claim}\label{clm:val66}
    There is no allocation that assigns a bundle of value $6$ to both the agents.
    \end{claim}
    
    \begin{proof}
        Let, for the sake of contradiction, $X = (X_1,X_2)$ be an allocation such that $v_1(X_1)=v_2(X_2)=6$. Thus, as mentioned in the proof of \fullref{clm:val75}, $p_1$ and $p_2$ cannot be in the same bundle. Let $p_1\in X_1$. Then, $v_1(X_1)=6$ is possible if either $p_3,r\in X_1$ or $r,r'\in X_1$ or $q,r'\in X_1$. Then $X_2=\{p_2,q,r'\}$ or $X_2=\{p_2,p_3,q\}$ or $X_2=\{p_2,p_3,r\}$ respectively. These bundles have value at most $5$ as they do not cover $u_3$, $u_4$, and $u_5$ respectively, in addition to $u_1$.
    \end{proof}
    
    Hence the only Pareto optimal allocations should assign a bundle of value $7$ to one agent and a bundle of value $5$ to another agent. The only possible such allocations are thus (i) $X_1=\{p_1,p_2,p_3,r'\},X_2=\{r,q\}$ and (ii) $Y_1=\{q,r'\}, Y_2=\{p_1,p_2,p_3,r\}$. 
    Neither of these allocations is $\mathrm{EF1}$ since $v_2(X_2\setminus \{g\})=6$ for any $g\in X_1$ and similarly, $v_1(Y_2\setminus \{g\})=6$ for any $g\in Y_2$.

    This completes the proof of \fullref{thm:nonexistance}.
\end{proof}

\begin{remark} \label{rem:identical}
    The instance that proves \fullref{thm:nonexistance} consists of two agents with nonidentical valuations. It can be shown that $\mathrm{EF1}+\mathrm{PO}$ allocations always exist for every number of agents with identical monotone submodular valuations. Indeed, every \emph{leximin} allocation \cite{PlautR20} is $\mathrm{EF1}+\mathrm{PO}$, by using a \emph{transferability lemma} for monotone submodular valuations that is due to Benabbou, Chakraborty, Igarashi and Zick \cite[Lemma~3.5]{BenabbouCIZ21}. We record a proof of this fact in \fullref{sec:identical}.
\end{remark}

\begin{openproblem}
    The instance that proves \fullref{thm:nonexistance} consists of certain goods $g$ that have zero marginal value for some subsets of $M$ that do not contain $g$. Does an $\mathrm{EF1}+\mathrm{PO}$ allocation always exist when every good $g \in M$ has positive marginal value for every $S \subseteq M \setminus \{g\}$?  
\end{openproblem}

\section{Nonexistence of \texorpdfstring{$\mathrm{EF1}+\mathrm{PO}$}{EF1+PO} for Chores under Coverage Valuations}\label{sec:chores}

We now show that the example in \fullref{tab:valuations} admits no $\mathrm{EF1}+\mathrm{PO}$ allocation when the items are treated as chores. That is, the values denote disutilities. We denote disutilities also by the same valuation function $v$, with the understanding that agents are interested in lower-valued bundles. Recall that an agent with higher disutility envies an agent with lower disutility. An allocation $X$ is $\mathrm{EF1}$ if, for any two agents $i,j$, either $X_i=\varnothing$ or there is some $c\in X_i$ such that $v_i(X_i\setminus c)\leq v_i(X_j)$.

\begin{theorem}\label{thm:chores}
	The instance shown in \fullref{tab:valuations} admits no $\mathrm{EF1}+\mathrm{PO}$ allocation when the values are treated as disutilities.
\end{theorem}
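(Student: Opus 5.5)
The plan is to mirror the proof of \fullref{thm:nonexistance}: first determine the Pareto frontier of achievable cost pairs, then check that none of the Pareto optimal allocations on it is $\mathrm{EF1}$ for chores. With two agents an allocation is determined by the bundle $S = X_1$, with $X_2 = M \setminus S$, so there are only $64$ allocations. The point of the argument is to organize this finite check so that it is readable.

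\textbf{Step 1: useful facts.} Every item covers at least two elements for each agent, so every nonempty bundle has positive cost. The elements $u_1$ and $u_2$ are covered only by $p_1$ and $p_2$ respectively, for both agents. Hence $v_i(S) + v_i(M \setminus S) \geq 7$ for each agent $i$, since every element is covered by some item. I will also use the indicator expressions for $v_1$ and $v_2$ from the proof of \fullref{thm:nonexistance} to compute costs quickly.

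\textbf{Step 2: the $\mathrm{EF1}$ conditions.} For chores, $\mathrm{EF1}$ requires that for each $i \neq j$ with $X_i \neq \varnothing$ there is some $c \in X_i$ with $v_i(X_i \setminus c) \leq v_i(X_j)$.
\begin{itemize}
\item If one agent gets everything, say $X_1 = M$, then $v_1(M \setminus c) \geq 6 > 0 = v_1(X_2)$ for every $c$, so the allocation is not $\mathrm{EF1}$. The same holds with the agents swapped. (These two allocations are PO, since the other agent has cost $0$.)
\item In general, the only way to drop an agent's cost by removing a chore is to remove an item that is the unique cover of some element.
\item Combining this with $v_i(X_i) + v_i(X_{3-i}) \geq 7$, $\mathrm{EF1}$ forces each agent's own cost to be roughly at most $4$. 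More precisely, after removing at most the unique-cover elements of one item, the cost must be at most the other bundle's cost.
\end{itemize}
So $\mathrm{EF1}$ allocations are ``balanced'' ones, with costs for both agents in the middle range, about $3$ to $4$.

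\textbf{Step 3: domination of balanced allocations.} I will show that every balanced candidate is Pareto dominated, by exhibiting an allocation in which both agents have lower costs, or one lower and one equal. I will split by where $p_1$ and $p_2$ go.
\begin{itemize}
\item If both go to the same agent, that agent already pays for $u_1$, $u_2$ and the element(s) they share. A balanced allocation then needs that agent's bundle to be just $\{p_1,p_2\}$ plus little else. I will check directly whether moving $p_3$, $q$, $r$ or $r'$ across lowers someone's cost without raising the other's.
\item If $p_1$ and $p_2$ are split, I enumerate the $16$ placements of $p_3,q,r,r'$. For each, I either find a violated $\mathrm{EF1}$ inequality or a Pareto improving swap. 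The useful improvements exploit the asymmetry between the agents: $r'$ is cheap for agent~$1$ relative to what it covers for agent~$2$, and symmetrically $r$ is cheap for agent~$2$. Reassigning $r$ to agent~$2$ and $r'$ to agent~$1$, or giving $q$ to whoever already covers $u_5,u_6,u_7$, tends to reduce costs.
\end{itemize}

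\textbf{Main obstacle.} I expect the hardest part to be the second case of Step~3, the split placement of $p_1,p_2$, where I must certify that the Pareto optimal allocations are exactly a short explicit list. I would first compute the cost pairs $(v_1(S), v_2(M\setminus S))$ for all $S$ containing $p_1$ but not $p_2$, and read off the minimal ones, in analogy with \fullref{clm:val75} and \fullref{clm:val66}. I would then check the $\mathrm{EF1}$ inequalities only on that frontier list. If the frontier in this case turns out empty of $\mathrm{EF1}$ allocations, then together with the unsplit case and the two one-agent-gets-all allocations this proves \fullref{thm:chores}.
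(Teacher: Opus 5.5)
Your top-level strategy matches the paper's: find the Pareto frontier of cost pairs, then test $\mathrm{EF1}$ only there. Splitting by the placement of $p_1,p_2$ could also be a workable alternative to the paper's cost-threshold claims. But the plan rests on a false reduction, and it never confronts the allocation that actually carries the theorem.

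In Step~2, the inequality $v_i(X_i)+v_i(X_{3-i})\geq 7$ and the $\mathrm{EF1}$ inequality both only bound $v_i(X_{3-i})$ from below. Neither gives an upper bound on $v_i(X_i)$, and indeed $\mathrm{EF1}$ allocations need not be ``balanced'':
\begin{itemize}
\item $\{p_1,p_2,q\}\mid\{p_3,r,r'\}$ is envy-free with costs $(5,5)$;
\item $\{p_1,p_3,r'\}\mid\{p_2,q,r\}$ is envy-free with costs $(5,6)$;
\item $\{p_1,p_2,p_3\}\mid\{q,r,r'\}$ is $\mathrm{EF1}$ with costs $(6,5)$.
\end{itemize}
So restricting the non-split case to ``$\{p_1,p_2\}$ plus little else'' is unjustified; you need the full frontier there as well. (Minor: $p_1$ and $p_2$ share no element, so $\{p_1,p_2\}$ costs $4$.)

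More seriously, computing that frontier shows that $\{p_1,p_2\}\mid\{p_3,q,r,r'\}$, with costs $(4,5)$ (or $(5,4)$ with the roles swapped), is genuinely Pareto optimal. The only test you propose for the non-split case, moving one of $p_3,q,r,r'$ across, therefore finds no improvement and leaves this allocation standing. It can only be excluded through $\mathrm{EF1}$. The missing observation is that under both valuations $\{p_3,q,r,r'\}$ covers each of $u_3,\dots,u_7$ at least twice. Hence removing any single chore leaves its holder at cost $5>4=v_i(\{p_1,p_2\})$; this is exactly how the paper's proof ends.

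Your assessment of where the difficulty lies is also inverted. In the split case every allocation gives some agent cost at least $6$. The only minimal pairs there are $(2,6)$ and $(6,2)$ (a lone $p_1$ or $p_2$ against everything else), and these fail $\mathrm{EF1}$ at once. The complete frontier is $(0,7),(2,6),(4,5),(5,4),(6,2),(7,0)$, and only the two middle points require the double-coverage argument.
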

\begin{proof}
	
	Consider the case when $X_1=\varnothing$ or $X_2=\varnothing$. Clearly, this is Pareto optimal but not $\mathrm{EF1}$ as an agent who gets all the items retains a positive disutility after removal of any one item. Also, no singleton item has disutility $1$, so an agent cannot get a bundle of disutility $1$.
	Now we characterize the Pareto optimal allocations assuming that both bundles are nonempty. Observe that the valuations are highly symmetric, with the two valuations differing at only $r,r'$, with only their coverages being interchanged in the two valuations. For each of the two agents $i\in \{1,2\}$, denote the item in $\{r,r'\}$ with lower disutility for agent $i$ by $r_{\min}^i$, and denote the other item by $r_{\max}^i$. Similarly, we collectively call $p_1,p_2,p_3$ as the {\em $p$-items}.
	\begin{claim}\label{clm:valuebelow5}
		There is no allocation where both agents have disutility strictly less than $5$.
	\end{claim}
	\begin{proof}
		One of the two agents must get two of $\{p_1,p_2,p_3\}$. Such an agent incurs a disutility of $4$, and adding any item to this bundle increases the disutility to $5$ or more. But if one agent gets a bundle $S\subset \{p_1,p_2,p_3\}$, $|S|=2$, then the other agent gets all of $\{q,r,r'\}$, which already makes the disutility $5$ or more.
	\end{proof}
	\begin{claim}\label{clm:value6}
		If one agent gets a bundle with disutility at most $2$, the other agent must get a bundle with disutility at least $6$.
	\end{claim}
	\begin{proof}
		To get disutility $2$, an agent $i$ must have a singleton bundle consisting of any one item from $\{p_1,p_2,p_3,r_{\min}^i\}$. Then the other agent gets a bundle with either all three $p$-items or two $p$-items along with $r_{\min}^{3-i}$ and $q$. This constitutes a disutility of at least $6$.
	\end{proof}
	Thus the Pareto frontier derived so far consists of disutility pairs $\{(7,0),(0,7),(6,2),(2,6)\}$ and some more pairs in which at least one coordinate is $5$ or larger. Note that any allocation with disutility values $(6,2)$ or $(2,6)$ is not $\mathrm{EF1}$ as the removal of any one item from the bundle with disutility $6$ does not drop the disutility to $2$ or below. Note that any allocation with disutilities among $\{(6,x),(x,6),(7,y),(y,7)\}$ where $x>2$ or $y>0$ are Pareto dominated by the pairs mentioned above, and hence do not give a $\mathrm{PO}$ allocation. 
	
	So we characterize the disutility pairs where one agent receives disutility $5$. 
	\begin{claim}\label{clm:value5}
		If one agent gets a bundle with disutility $5$, the other agent gets disutility at least $4$. Thus each of $(5,4)$ and $(4,5)$ constitutes a Pareto optimal pair of disutilities.
	\end{claim}
	\begin{proof}
		Assume, for the sake of contradiction, that agent $i$ gets a bundle with disutility exactly $5$ and agent $3-i$ gets a bundle with disutility strictly less than $4$. Note that the other agent cannot get a bundle with disutility $2$ or less by \fullref{clm:value6} above. 
		
		So assume that the other agent gets a bundle with disutility exactly $3$. Thus the bundle must be $\{q\}$ or $\{p_3,r_{\min}^{3-i}\}$. However, the first agent must then have either all the $p$-items, making the disutility $6$ or more, or must have at least two of the $p$-items along with $q$ and $r_{\min}^i$. Note that the $r_{\min}^i$ and $r_{\max}^i$ items are interchanged for the two agents. Thus the disutility of the first agent must be $6$ or more, contradicting the assumption that the first agent has disutility exactly $5$. 
		
		The disutility pairs $(5,4)$ or $(4,5)$ are attained precisely for the bundles $\{p_1,p_2\},
		\{p_3,q,r,r'\}$. This can be seen as follows. Any bundle that has all the $p$-items must have disutility at least $6$. So for disutilities $5$ and $4$, no bundle can contain all the $p$-items.
		So one bundle must contain two $p$-items. Let $i$ be the agent receiving the bundle with disutility $5$, and let $j=3-i$ be the other agent. 
		
		{\em Case 1: The disutility $4$ bundle contains two $p$-items:} In this case, the bundle cannot contain any other item, as no other item covers a subset of the elements covered by any two $p$-items. If one of the two $p$-items is $p_3$, then the other bundle must have $p_1$ or $p_2$ along with $\{q,r,r'\}$, constituting disutility $6$. So the only possible such allocation is $\{p_1,p_2\},\{p_3,r,r',q\}$. Note that the disutilities remain the same when any agent gets either of the two bundles.
		
		{\em Case 2: The disutility $5$ bundle contains two $p$-items:} Observe that both $r,r'$ cannot be present in the bundle with disutility $4$, as they together incur a disutility of $5$ to either agent. If $p_1,p_2$ are present in the bundle with disutility $5$, then the other item must be $q$ as $r$ or $r'$ together with $p_1,p_2$ make the disutility $6$. But if the bundle with disutility $5$ is $\{p_1,p_2,q\}$ then the other bundle has both $r,r'$, making the disutility $5$ and not $4$. So the bundle with disutility $5$ must contain $p_3$ along with one of $p_1$ or $p_2$. It must also contain at least one of $r,r'$ as stated earlier. However, for disutility $5$, $r_{\max}^i$ cannot be present with $p_2,p_3$ or $p_1,p_3$, as it makes the disutility $6$. Moreover, adding $q$ to $\{p_1,p_3,r_{\min}^i\}$ or $\{p_2,p_3,r_{\min}^i\}$ also makes the disutility $6$. So the only possibility for the bundle with disutility $5$ is then $\{p_\ell,p_3,r_{\min}^i\}$ for $\ell\in \{1,2\}$. However, the remaining items then contain $q$ and $r_{\min}^j$ for the other agent, making the disutility $5$ and not $4$.
		
		Thus the only possibilities for an allocation with the disutility values $(5,4)$ or $(4,5)$ are $\{p_1,p_2\},\{p_3,r,r',q\}$. 
	\end{proof}
	Moreover, any allocation with disutility profile $(5,x)$ for $x>4$ is Pareto dominated by the allocation with profile $(5,4)$, and symmetrically any profile $(x,5)$ for $x>4$ is Pareto dominated by $(4,5)$.
	Note however, that the allocation $\{p_1,p_2\},\{p_3,r,r',q\}$ is not $\mathrm{EF1}$ as each element of the universe that is covered by $\{p_3,r,r',q\}$ is covered at least twice, and hence removing any item does not decrease the disutility of the bundle with value $5$.
\end{proof}

\section{\texorpdfstring{$\mathrm{NP}$}{NP}-hardness of \texorpdfstring{$\mathrm{EF1}+\mathrm{PO}$}{EF1+PO} for Submodular Valuations}\label{sec:hardness-goods}

We now show that determining the existence of $\mathrm{EF1}+\mathrm{PO}$ allocations is $\mathrm{NP}$-hard for three agents with monotone submodular valuations.

\begin{theorem}\label{thm:hardness}
    Given a fair division instance with three agents and an arbitrary number of items, where agents have monotone submodular valuations, it is $\mathrm{NP}$-hard to determine whether the given instance admits an $\mathrm{EF1}+\mathrm{PO}$ allocation.
\end{theorem}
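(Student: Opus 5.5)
We reduce from an $\mathrm{NP}$-hard partition-type problem; the paper's macros suggest \bvc{}. The plan is to build three-agent instances in which agents $1$ and $2$ use two copies of the gadget from \fullref{tab:valuations}, and agent $3$ encodes the hard problem.

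Concretely, take a graph $G$ and target $k$, and introduce one item per vertex. Agent $3$ gets a coverage valuation $\vcov(S)$ equal to the number of edges covered by $S$, possibly plus a small additive term that breaks ties. The six gadget items from \fullref{tab:valuations} are added, valued by agents $1$ and $2$ as in \fullref{thm:nonexistance}. Agent $3$ values them at zero, or at a tiny amount. Agents $1$ and $2$ give the vertex items a small additive weight $\epsilon$ each. All valuations then remain monotone submodular, being sums of coverage and additive functions.

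The argument has two directions.

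\emph{Completeness.} If $G$ has a balanced vertex cover of size $k$, we exhibit an explicit allocation. Agent $3$ receives the cover $C$ and so attains the full edge count, which is the maximum possible; hence agent $3$ cannot be improved. The gadget and the remaining vertex items are split between agents $1$ and $2$ so as to break the nonexistence of \fullref{thm:nonexistance}. Extra vertex items in agent $1$'s or agent $2$'s bundle are meant to make the envy in the $(7,5)$ allocation removable by dropping one item. The calibration is that $n-k$ leftover items give exactly the slack needed for EF1, which ties the size $k$ to existence. PO then follows because no redistribution raises agents $1$ and $2$ past the Pareto frontier of the gadget, as characterized in \fullref{clm:val75} and \fullref{clm:val66}, without hurting agent $3$.

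\emph{Soundness.} Suppose an $\mathrm{EF1}+\mathrm{PO}$ allocation $X$ exists. By PO, agent $3$ must receive a set covering all edges; otherwise moving vertex items to agent $3$ would be a Pareto improvement, since the $\epsilon$ weights are small enough to be dominated. Agents $1$ and $2$ must receive a PO split of the gadget, which by the two claims is of type $(7,5)$. EF1 between agents $1$ and $2$ then forces the vertex items they hold to meet a counting constraint, and this constraint bounds $|X_3|$. EF1 involving agent $3$ supplies the reverse bound. Together these yield a vertex cover of the required size.

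The main obstacle is choosing the weights so that two requirements hold simultaneously. First, PO must force agent $3$ to take a full cover. Second, the leftover vertex items must fix the EF1 violation of \fullref{thm:nonexistance} exactly when the cover has the right size. I would first try unit-scale weights with $\epsilon = 1/(2n)$, check the handful of gadget configurations case by case, and adjust the choice of source problem if the counting does not close.
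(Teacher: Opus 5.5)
There is a genuine gap, and it is at the step ``by PO, agent $3$ must receive a set covering all edges.'' Moving a vertex item from agent $1$ or $2$ to agent $3$ costs the donor $\epsilon>0$. So the move is never a Pareto improvement, however small $\epsilon$ is. PO is not a weighted-welfare comparison, and nothing in your construction forces agent $3$ to hold a cover.

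The calibration also fails. In a $(7,5)$ gadget split, the agent at $5$ still values the other gadget bundle at $6$ after removing any gadget item. EF1 therefore needs the vertex items to supply a full unit of value.
\begin{itemize}
\item With $\epsilon=1/(2n)$ the vertex items supply at most $1/2$. Your instance then never has an $\mathrm{EF1}+\mathrm{PO}$ allocation, so completeness fails.
\item With the ``exact slack'' choice $\epsilon=1/(n-k)$, soundness fails instead. PO only makes $X_3$ coverage-maximal for its size, and agent $3$'s EF1 only compares coverages.
\end{itemize}

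Here is a concrete counterexample for the second case. Let $G$ be two disjoint triangles $\{a_1,a_2,a_3\},\{b_1,b_2,b_3\}$, so $n=6$, $k=3$, and every cover has size at least $4$. Take $\epsilon=1/3$ and the allocation $X_1=\{p_1,p_2,p_3,r'\}$, $X_2=\{q,r,a_3,b_2,b_3\}$, $X_3=\{a_1,a_2,b_1\}$. Its utilities are $(7,6,5)$.
\begin{itemize}
\item It is EF1. Agent $2$'s envy disappears after removing $p_1$. Also $\{a_3,b_2,b_3\}$ covers exactly $5$ edges, so agent $3$ does not envy.
\item It is PO. Agent $3$ needs at least three vertex items to keep $5$. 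That leaves agents $1,2$ at most three items, worth at most $1$ in total. Keeping agent $1$ at $7$ and agent $2$ at $6$ then forces the $(7,5)$ split with all three items going to agent $2$. Three vertices cover at most $5$ edges, so no one gains strictly.
\end{itemize}

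The missing idea is where the paper places the coverage function. It goes into a gadget agent, scaled so that a full cover is worth exactly one gadget unit, and the third agent is additive:
\[
w_1(X)=\abs{E}\,v_1(X\cap M)+\vcov(X\cap P),\quad w_2(X)=\abs{E}\,v_2(X\cap M),\quad w_3(X)=\abs{X\cap P}.
\]
PO is used only to pin the gadget split to (i) or (ii) and to keep vertex items away from $a_2$. Split (i) fails EF1 as in \fullref{thm:nonexistance}, because $a_2$ ignores vertex items. In split (ii), $a_1$'s EF1 toward $a_2$ reads $5\abs{E}+\vcov(X_1\cap P)\ge 6\abs{E}$, which holds iff $X_1\cap P$ is a vertex cover. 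Meanwhile $a_3$'s EF1 toward $a_1$ caps $\abs{X_1\cap P}$ at about $n/2$. So the cover property comes from EF1 rather than PO, and the size bound comes from the additive agent's envy rather than from leftover slack.
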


\begin{proof}
    We give a reduction from the $\bvc$ problem \cite{ConitzerS06}, which is a simple variant of the well-known $\mathrm{NP}$-complete problem $\vc$.

    \begin{definition}[$\bvc$]\label{def:halfvc}
        Given a graph $G=(V,E)$ on $n$ vertices, determine if $G$ admits a vertex cover of size at most $\frac{n}{2}$.
    \end{definition}

    Conitzer and Sandholm \cite[Lemma~1]{ConitzerS06} proved that $\bvc$ is $\mathrm{NP}$-complete by a simple reduction from $\vc$.
        
        We now give a reduction from $\bvc$ to the problem of determining the existence of $\mathrm{EF1}+\mathrm{PO}$ allocations for monotone submodular valuations, which we denote $\existefonepo$. Consider $G=(V,E)$ to be an instance of the $\bvc$ problem. We construct an instance $\mathcal{I}$ of $\existefonepo$ problem as follows. The instance $\mathcal{I}$ has three agents $a_1,a_2,a_3$. It has $\abs{V}+6$ items denoted by $M\cup P$, where $M \coloneq \{p_1,p_2,p_3,q,r,r'\}$ and $P \coloneq \{g_v\mid v\in V\}$. Let $v_1$ and $v_2$ refer to the valuations of agents $1$ and $2$ used in the proof of \fullref{thm:nonexistance} respectively. For any $Q \subseteq P$, define the function $\vcov(Q) \coloneq \abs{\{e = (u,v) \in E \mid p_u \in Q \text{ or } p_v \in Q\}}$. That is, $\vcov(Q)$ is the total number of edges of $G$ covered by the vertices corresponding to the items in $Q$. The valuations $w_1$, $w_2$ and $w_3$ of the agents $a_1$, $a_2$ and $a_3$ are defined as follows: 
        \begin{eqnarray*}
            w_1(X) &\coloneq & \abs{E}\cdot v_1(X\cap M) + \vcov(X\cap P) \\
            w_2(X) &\coloneq & \abs{E}\cdot v_2(X\cap M) \\
            w_3(X) &\coloneq & \abs{X\cap P}
        \end{eqnarray*}
        Clearly, this instance can be constructed in polynomial time, as the valuations $w_1$, $w_2$ and $w_3$ are polynomial-time evaluable given a standard encoding of the graph $G$, and the constant-size \fullref{tab:valuations}. Since $v_1$, $v_2$ and $\vcov$ are unweighted coverage functions, $w_1$ and $w_2$ are weighted coverage functions with universes $U \cup E$ and $U$ respectively, where $U$ is the universe that corresponds to $v_1$ and $v_2$. $w_3$ is simply an additive binary valuation. Hence all the agents in the constructed instance have normalized monotone submodular valuations. Observe that since agent $a_3$ does not positively value any of the items in $M$, any $\mathrm{PO}$ allocation must allocate them to $a_1$ and $a_2$. The only $\mathrm{PO}$ allocations of items in $M$, as described in the proof of \fullref{thm:nonexistance} are (i) $X_1\cap M=\{p_1,p_2,p_3,r'\}$, $X_2\cap M=\{r,q\}$ and (ii) $Y_1\cap M=\{q,r'\}$, $Y_2\cap M=\{p_1,p_2,p_3,r\}$. Moreover, no $\mathrm{PO}$ allocation allocates the items in $P$ to $a_2$, or allocates any strict superset of a vertex cover to $a_1$. Now we prove the correctness of the reduction.
        
        \begin{claim}\label{clm:forward}
            If $G$ has a vertex cover of size at most $\frac{n}{2}$ then $\mathcal{I}$ has an $\mathrm{EF1}+\mathrm{PO}$ allocation.
        \end{claim}
        \begin{proof}
            Let $S$ be a vertex cover of $G$ with $\abs{S}\leq \frac{n}{2}$. Let $S'$ be the corresponding set of items in $P$. Then we show that the allocation $X_1=\{q,r'\}\cup S'$, $X_2=\{p_1,p_2,p_3,r\}$, $X_3=P\setminus S'$ is $\mathrm{EF1}+\mathrm{PO}$. From \fullref{thm:nonexistance}, $X_2$ is the best possible bundle for $a_2$, and there is no Pareto dominating allocation of items in $M$. For $a_1$, getting any subset of items that does not correspond to a vertex cover is strictly worse than $S'$. Thus $X$ is a Pareto optimal allocation. It is also $\mathrm{EF1}$. The only envy is from $a_1$ to $a_2$ as $v_1(X_1)=6\abs{E}$ whereas $v_1(X_2)=7\abs{E}$. However, the envy goes away when any item is removed from $X_2$. 
        \end{proof}
        
        \begin{claim}\label{clm:reverse}
            If $\mathcal{I}$ admits an $\mathrm{EF1}+\mathrm{PO}$ allocation, then $G$ has a vertex cover of size at most $\frac{n}{2}$.
        \end{claim}
        \begin{proof}
        We prove the contrapositive. Assume that $G$ does not admit a vertex cover of size at most $\frac{n}{2}$. The only $\mathrm{PO}$ allocations of items in $M$ are (i) $X_1\cap M=\{p_1,p_2,p_3,r'\},X_2\cap M=\{r,q\}$ and (ii) $X_1\cap M=\{q,r'\}, X_2\cap M=\{p_1,p_2,p_3,r\}$. In (i), $a_3$ $\mathrm{EF1}$-envies $a_1$. So an $\mathrm{EF1}+\mathrm{PO}$ allocation must allocate according to (ii).
            Finally, any allocation $X$ where $\abs{X_1\cap P}>\frac{n}{2}$ is not $\mathrm{EF1}$ as $a_3$ $\mathrm{EF1}$-envies $a_1$. On the other hand, if the vertices corresponding to $X_1\cap P$ do not form a vertex cover, then $v_1(X_1)<6|E|$ according to (ii), and $a_1$ $\mathrm{EF1}$-envies $a_2$. Hence neither $\mathrm{PO}$ allocation is $\mathrm{EF1}$ if $\abs{S}>\frac{n}{2}$. This completes the proof.
        \end{proof}
        \fullref{clm:forward} and \fullref{clm:reverse} complete the proof of $\mathrm{NP}$-hardness of $\existefonepo$ for three agents with monotone submodular valuations.
\end{proof}
 
\section*{Acknowledgements}

P.N. is partly supported by ANRF grant ANRF/ARGM/2025/002769/MTR.

\bibliography{ref}
\newpage 

\begin{appendices}

\section{Existence of \texorpdfstring{$\mathrm{EF1}+\mathrm{PO}$}{EF1+PO} for Identical Monotone Submodular Valuations}\label{sec:identical}

In this section, we record a proof of the fact that $\mathrm{EF1}+\mathrm{PO}$ always exist when all the agents have identical monotone submodular valuations. The idea is to prove that every leximin allocation is $\mathrm{EF1}+\mathrm{PO}$ using a `transferability property' for monotone submodular valuations due to Benabbou, Chakraborty, Igarashi and Zick \cite[Lemma~3.5]{BenabbouCIZ21}.

\begin{theorem}[Identical monotone submodular valuations]
	\label{thm:identical-submodular-ef1-po}
	Suppose all agents have the same monotone submodular
	valuation $v$. Then, for any number of agents, an $\mathrm{EF1}+\mathrm{PO}$ allocation exists.  Indeed, every leximin
	allocation has both properties.
\end{theorem}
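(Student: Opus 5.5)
We will show the two properties separately for an arbitrary leximin allocation $X=(X_1,\ldots,X_n)$, that is, one whose vector of values $(v(X_1),\ldots,v(X_n))$, sorted in nondecreasing order, is lexicographically maximum among all allocations. Such an allocation exists because there are finitely many allocations. Since the valuations are identical, all comparisons are made with the single function $v$. Pareto optimality should follow directly from the leximin property. $\mathrm{EF1}$ will be proved by contradiction: a violation of $\mathrm{EF1}$ yields a single-item transfer that strictly improves the leximin vector. This transfer is where the transferability property \cite[Lemma~3.5]{BenabbouCIZ21} enters.

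\emph{PO.} Suppose $Y$ Pareto dominates $X$, so $v(Y_i)\ge v(X_i)$ for all $i$, with strict inequality for some $i$. We will use the standard fact that if $a\ge b$ coordinatewise then $\mathrm{sort}(a)\ge \mathrm{sort}(b)$ coordinatewise. This holds because the $k$-th smallest entry of $a$ is at least the $k$-th smallest entry of $b$: any $n-k+1$ entries of $a$ dominate the corresponding entries of $b$. Hence $\mathrm{sort}(v(Y))\ge\mathrm{sort}(v(X))$ coordinatewise. Moreover, $\mathrm{sort}(v(Y))$ has strictly larger sum than $\mathrm{sort}(v(X))$, so the two sorted vectors differ. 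The first coordinate where they differ is therefore larger for $Y$, so $Y$ is leximin-better, which contradicts the choice of $X$.

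\emph{EF1.} Suppose agent $i$ violates $\mathrm{EF1}$ towards agent $j$. Then $X_j\neq\varnothing$ and $v(X_i)<v(X_j\setminus g)$ for every $g\in X_j$. First we find an item of $X_j$ with positive marginal value for $X_i$; this is the transferability step. Fix any $h\in X_j$. By monotonicity and submodularity,
\[
0< v(X_j\setminus h)-v(X_i)\le v(X_i\cup X_j)-v(X_i)\le \sum_{g\in X_j} v(g\mid X_i).
\]
Hence some $g\in X_j$ has $v(g\mid X_i)>0$. Now move this $g$ from $X_j$ to $X_i$, obtaining the allocation $X'$. Write $a=v(X_i)$. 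Then $v(X_i\cup g)>a$, and $v(X_j\setminus g)>a$ by the assumed $\mathrm{EF1}$ violation. All other bundles are unchanged. So the multiset of values loses $a=v(X_i)$ and $v(X_j)$, where $v(X_j)>a$, and gains two values that are both strictly greater than $a$.

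\emph{Leximin comparison.} This step needs care with ties, and I expect it to be the main obstacle. Entries strictly less than $a$ are identical in both multisets. The number of entries less than or equal to $a$ drops by exactly one, since $a$ is removed and nothing at most $a$ is added. Let $k$ be the number of entries of $X$ that are at most $a$. Then the two sorted vectors agree on all positions before $k$. At position $k$, the vector for $X$ has the value $a$, while the vector for $X'$ has a value strictly greater than $a$. So $X'$ is strictly leximin-better than $X$, a contradiction. Therefore $X$ is $\mathrm{EF1}$, and together with the first part this proves \fullref{thm:identical-submodular-ef1-po}.
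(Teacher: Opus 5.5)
Your proof is correct and follows the paper's route. You take a leximin allocation, derive $\mathrm{PO}$ from lexicographic maximality, and refute an $\mathrm{EF1}$ violation by a single-good transfer that deletes one occurrence of $v(X_i)$ from the utility multiset while creating no value at most $v(X_i)$. The differences are only in detail: you prove the transferability step directly from monotonicity and submodularity via $v(X_i\cup X_j)-v(X_i)\le\sum_{g\in X_j}v(g\mid X_i)$ instead of citing \cite[Lemma~3.5]{BenabbouCIZ21}, and you handle the $\mathrm{PO}$ sorting fact and the ties in the lexicographic comparison more explicitly than the paper does.
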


\begin{proof}
	Let $X$ be a leximin allocation \cite{PlautR20}, i.e., one whose nondecreasing utility vector
	
    $\operatorname{sort}\bigl(v(X_1),\ldots,v(X_n)\bigr)$ is lexicographically maximal. Pareto optimality is immediate; a Pareto
	improvement would weakly increase every agent's utility and strictly increase
	one of them, and would therefore strictly improve this sorted vector.
	
	Suppose $X$ is not $\mathrm{EF1}$. Then there are agents $i,j \in [n]$ such that, with $r := v(A_i)$, $v(A_j \setminus \{g\}) > r$ for every $g \in A_j$. In particular, $i$ envies $j$.  By the transferability property for
	monotone submodular valuations
	\cite[Lemma~3.5]{BenabbouCIZ21}, there is a good $g \in X_j$ with $v(X_i \cup \{g\}) > v(X_i) = r$.
	
	Transfer this good from $j$ to $i$, producing an allocation $Y$. The
	recipient has value $v(Y_i) > r$, while the $\mathrm{EF1}$-envy in $X$ implies that the donor's current utility
	$v(Y_j) = v(X_j \setminus \{g\}) > r$.  Every other utility is unchanged.
	Thus no utility below $r$ changes, one occurrence of $r$ disappears,
	and no new utility at most $r$ is created. Hence
	\[
	\operatorname{sort}\bigl(v(Y_1),\ldots,v(Y_n)\bigr)
	\succ_{\mathrm{lex}}
	\operatorname{sort}\bigl(v(X_1),\ldots,v(X_n)\bigr),
	\]
	contradicting the leximin optimality of $X$. Therefore $X$ is $\mathrm{EF1}$.
\end{proof}

This result is in contrast with the case of monotone \emph{subadditive} valuations, for which there exists an instance with two agents with identical valuations over four goods that does not admit an $\mathrm{EF1}+\mathrm{PO}$ allocation \cite[Theorem~3.3]{CaragiannisKMPSW19}.

\end{appendices}

\end{document}